\newtheorem{theorem}{Theorem}
\newtheorem{lemma}{Lemma}
\newtheorem{property}{Property}
\newcommand{\lm}{\ensuremath{\textsf{lm}}}
\newcommand{\lep}[1]{\ensuremath{\ell_{#1}}}
\newcommand{\rep}[1]{\ensuremath{r_{#1}}}
\newcommand{\len}{\ensuremath{\textsf{len}}}
\newcommand{\pred}{\ensuremath{\textsf{pred}}}
\newcommand{\lmm}{\ensuremath{\textsf{lmco}}}
\newcommand{\coocc}{\ensuremath{\textsf{co}}}
\newcommand{\A}{\texttt{A}}
\newcommand{\B}{\texttt{B}}
\newcommand{\C}{\texttt{C}}
\newcommand{\D}{\texttt{-}}
\newcommand{\s}{\;}
\newcommand{\garb}{\texttt{\$}\cdots\texttt{\$}}
\newcommand{\m}{\mu}
\title{The Complexity of the Co-Occurrence Problem}
\author{Philip Bille \and Inge Li G{\o}rtz \and Tord Stordalen}
\date{}
\begin{document}
\maketitle
\begin{centering}
\large 
Technical University of Denmark, DTU Compute, Kgs.~Lyngby, Denmark

\texttt{\{phbi,inge,tjost\}@dtu.dk}

\end{centering}

\begin{abstract}
    Let $S$ be a string of length $n$ over an alphabet $\Sigma$ and let $Q$ be a subset of $\Sigma$ of size $q \geq 2$. The \emph{co-occurrence problem} is to construct a compact data structure that supports the following query: given an integer $w$ return the number of length-$w$ substrings of $S$ that contain each character of $Q$ at least once. This is a natural string problem with applications to, e.g., data mining, natural language processing, and DNA analysis. The state of the art is an $O(\sqrt{nq})$ space data structure that --- with some minor additions --- supports queries in $O(\log\log n)$ time~[CPM~2021]. 

    Our contributions are as follows. Firstly, we analyze the problem in terms of a new, natural parameter $d$, giving a simple data structure that uses $O(d)$ space and supports queries in $O(\log\log n)$ time. The preprocessing algorithm does a single pass over $S$, runs in expected $O(n)$ time, and uses $O(d + q)$ space in addition to the input. Furthermore, we show that $O(d)$ space is optimal and that $O(\log\log n)$-time queries are optimal given optimal space. Secondly, we bound $d = O(\sqrt{nq})$, giving clean bounds in terms of $n$ and $q$ that match the state of the art.  Furthermore, we prove that $\Omega(\sqrt{nq})$ bits of space is necessary in the worst case, meaning that the $O(\sqrt{nq})$ upper bound is tight to within polylogarithmic factors. All of our results are based on simple and intuitive combinatorial ideas that simplify the state of the art. 
    
\end{abstract}

\section{Introduction}
We consider the \emph{co-occurrence problem} which is defined as follows. Let $S$ be a string of length $n$ over an alphabet $\Sigma$ and let $Q$ be a subset of $\Sigma$ of size $q \geq 2$. For two integers $i$ and $j$ where $1 \leq i \leq j \leq n$, let  $[i,j]$ denote the discrete interval $\{i,i+1,\ldots,j\}$, and let $S[i,j]$ denote the substring of $S$ starting at $S[i]$ and ending at $S[j]$. The interval $[i,j]$ is a \emph{co-occurrence} of $Q$ in $S$ if $S[i,j]$ contains each character in $Q$ at least once. The goal is to preprocess $S$ and $Q$ into a  data structure that supports the query
\begin{itemize}
    \item $\coocc_{S,Q}(w)$: return the number of co-occurrences of $Q$ in $S$ that have length $w$, i.e., the number of length-$w$ substrings of $S$ that contain each character in $Q$ at least once.
\end{itemize}
For example, let $\Sigma = \{\A,\B,\C,\D\}$, $Q = \{\A,\B,\C\}$ and 
\[
    S \; = \; 
    \underset{1}{\D}\;\;
    \underset{2}{\D}\;\;
    \underset{3}{\D}\;\;
    \underset{4}{\D}\;\;
    \underset{5}{\B}\;\;
    \underset{6}{\C}\;\;
    \underset{7}{\D}\;\;
    \underset{8}{\A}\;\;
    \underset{9}{\C}\;\;
    \underset{10}{\C}\;\;
    \underset{11}{\B}\;\;
    \underset{12}{\D}\;\;
    \underset{13}{\D}\;\;.
\]
Then 
\begin{itemize}
    \item $\coocc_{S,Q}(3) = 0$, because no length-three substring contains all three characters \A, \B, and \C.
    \item $\coocc_{S,Q}(4) = 2$, because both $[5,8]$ and $[8,11]$ are co-occurrences of $Q$. 
    \item  $\coocc_{S,Q}(8) = 6$, because all six of the length-eight substrings of $S$  are co-occurrences of $Q$. 
\end{itemize} 
Note that only sublinear-space data structures are interesting. With linear space we can simply precompute the answer to $\coocc_{S,Q}(i)$ for each $i \in [0,n]$ and support queries in constant time.   

This is a natural string problem with applications to, e.g., data mining, and a large amount of work has gone towards related problems such as finding frequent items in streams~\cite{DLM2002,GDDL+2003,KSP2003,LCK2014} and finding frequent sets of items in streams~\cite{AH2018,CL2006,DP2013,LL2009,LCWC2005,MTZ2008,YYLL+2015}. Furthermore, it is similar to certain string problems, such as episode matching~\cite{DFGG+1997} where the goal is to determine all the substrings of $S$ that occur a certain number of times within a given distance from each other. Whereas previous work is mostly concerned with identifying frequent patterns either in the whole string or in a sliding window of fixed length, Sobel, Bertram, Ding, Nargesian and Gildea~\cite{SBDN+2021} introduced the problem of studying a given pattern across all window lengths (i.e., determining $\coocc_{S,Q}(i)$ for all $i$). They motivate the problem by listing potential applications such as training models for natural language processing (short and long co-occurrences of a set of words tend to represent respectively syntactic and semantic information), automatically organizing the memory of a computer program for good cache behaviour (variables that are used close to each other should be near each other in memory), and analyzing DNA sequences (co-occurrences of nucleotides in DNA provide insight into the evolution of viruses). See~\cite{SBDN+2021} for a more detailed discussion of these applications.

Our work is inspired by~\cite{SBDN+2021}. They do not consider fast, individual queries, but instead they give an $O(\sqrt{nq})$ space data structure from which they can determine  $\coocc_{S,Q}(i)$ for each $i=1,\ldots,n$ in $O(n)$ time. Supporting fast queries is a natural extension to their problem, and we note that their solution can be extended to support individual queries in $O(\log\log n)$ time using the techniques presented below. 

A key component of our result is a solution to the following simplified problem. A co-occurrence $[i,j]$ is \emph{left-minimal} if $[i+1,j]$ is not a co-occurrence. The \emph{left-minimal co-occurrence problem} is to preprocess $S$ and $Q$ into a data structure that supports the query 
\begin{itemize}
    \item $\lmm_{S,Q}(w)$: return the number of left-minimal co-occurrences of $Q$ in $S$ that have length $w$. 
\end{itemize}
We first solve this more restricted problem, and then we solve the co-occurrence problem by a reduction to the left-minimal co-occurrence problem. To our knowledge this problem has not been studied before. 

\subsection{Our Results}
Our two main contributions are as follows. Firstly, we give an upper bound that matches and simplifies the state of the art. Secondly, we provide lower bounds that show that our solution has optimal space, and that our query time is optimal for optimal-space data structures. As in previous work, all our results work on the word RAM model with logarithmic word size.

To do so we use the following parametrization. Let $\delta_{S,Q}$ be the difference encoding of the sequence $\lmm_{S,Q}(1),\ldots,\lmm_{S,Q}(n)$. That is, $\delta_{S,Q}(i) = \lmm_{S,Q}(i) - \lmm_{S,Q}(i-1)$ for each $i \in [2,n]$ (note that $\lmm(1) = 0$ since $|Q| \geq 2$).  Let $Z_{S,Q} = \{i \in [2,n] \mid \delta(i) \neq 0\}$ and let $d_{S,Q} = |Z_{S,Q}|$. For the remainder of the paper we will omit the subscript on $\lmm$, $\coocc$, $Z$, and $d$ whenever $S$ and $Q$ are clear from the context. Note that $d$ is a parameter of the problem since it is determined exclusively by the input $S$ and $Q$. We prove the following theorem. 

\begin{theorem}{\label{thm:overall_results}}
Let $S$ be a string of length $n$ over an alphabet $\Sigma$, let $Q$ be a subset of $\Sigma$ of size $q \geq 2$, and let $d$ be defined as above.
\begin{itemize}
    \item[(a)] There is an $O(d)$ space data structure that supports both $\lmm_{S,Q}$- and $\coocc_{S,Q}$-queries in $O(\log\log n)$ time. The preprocessing algorithm does a single pass over $S$, runs in expected $O(n)$ time and uses $O(d + q)$ space in addition to the input. 
    
    \item[(b)] Any data structure supporting either $\lmm_{S,Q}$- or $\coocc_{S,Q}$-queries needs $\Omega(d)$ space in the worst case, and any $d\log^{O(1)}d$ space data structure cannot support queries faster than $\Omega(\log\log n)$ time.
    
    \item[(c)]  The parameter $d$ is bounded by $O(\sqrt{nq})$, and any data structure supporting either $\lmm_{S,Q}$- or $\coocc_{S,Q}$-queries needs $\Omega(\sqrt{nq})$ \emph{bits} of space in the worst case. 
    \end{itemize}
\end{theorem}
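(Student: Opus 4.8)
The plan is to handle the three parts in turn, building everything on one structural observation: for a fixed right endpoint $j$, the left endpoints $i$ for which $[i,j]$ is a co-occurrence form a prefix $[1,\lep{j}]$, so there is at most one left-minimal co-occurrence ending at $j$, namely $[\lep{j},j]$, where $\lep{j}$ is the minimum over $c\in Q$ of the position of the last occurrence of $c$ in $S[1,j]$. Moreover $\lep{j}$ is non-decreasing in $j$. I write $L_j = j-\lep{j}+1$ for the length of this window, so that $\lmm(w) = |\{j : L_j = w\}|$.

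For part (a) I would reduce both queries to a predecessor search plus $O(1)$ arithmetic. Since $\lmm(w) = \sum_{i \le w} \delta(i)$, a $\lmm(w)$-query is a prefix sum of $\delta$ that depends only on the largest element of $Z$ at most $w$; storing $Z$ in a $y$-fast-trie-style predecessor structure ($O(d)$ space, $O(\log\log n)$ time) together with the rank-indexed prefix sums of $\delta$ answers it. For $\coocc$ I would prove the identity $\coocc(w) = \sum_{\ell\le w}\lmm(\ell) - \max(0,\, w - j_0)$, where $j_0$ is the length of the shortest co-occurrence prefix of $S$; the double prefix sum expands as $\sum_{\ell \le w}\lmm(\ell) = (w+1)\sum_{i\in Z,\, i\le w}\delta(i) - \sum_{i\in Z,\, i \le w} i\,\delta(i)$, so additionally precomputing the rank-indexed prefix sums of $i\,\delta(i)$ lets me answer $\coocc$ with the same predecessor query. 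Preprocessing is a single left-to-right pass maintaining the last occurrence of each $Q$-character by hashing (hence expected $O(n)$ time) and the monotone value $\lep{j}$, updating the difference encoding as each $L_j$ is produced. I expect the delicate point to be keeping the working space at $O(d+q)$ throughout the pass: a per-length histogram can be far larger than $d$, so I would maintain only the nonzero support of the running difference encoding and argue, using the monotonicity of $\lep{j}$, that this support stays $O(d)$.

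For part (b) I would prove the $\Omega(d)$ space bound by an encoding argument: construct a family of inputs in which $Z$ can be an essentially arbitrary size-$d$ subset of a length-$\Theta(n)$ range and the query answers determine $Z$; since there are $(n/d)^{\Omega(d)}$ such inputs, each forcing a distinct memory state, any data structure needs $\Omega(d\log(n/d)) = \Omega(d)$ space. For the $\Omega(\log\log n)$ time bound under $d\log^{O(1)}d$ space I would reduce (colored) predecessor search on $d$ keys in a universe of size $n$ to our query---exploiting that $\lmm(w)$ already computes a predecessor-indexed prefix sum---and invoke the cell-probe lower bounds of P\u{a}tra\c{s}cu--Thorup and Beame--Fich in exactly this space regime.

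For part (c), to bound $d = O(\sqrt{nq})$ I would partition $[j_0,n]$ into the maximal blocks on which $\lep{j}$ is constant; within a block the lengths $L_j$ run through a contiguous interval $[m_b,M_b]$, so $\lmm$ is a superposition of unit-height intervals and thus has at most $|\{m_b\}| + |\{M_b\}|$ change points. The key observation is that $M_b$ equals the gap between two consecutive occurrences of the block's bottleneck character $S[\lep{j}]$; since distinct positive integers summing to at most $nq$ number $O(\sqrt{nq})$, and a set of witnessing gaps summing to at most $nq$ realizes all distinct values, $|\{M_b\}| = O(\sqrt{nq})$. Each $m_b$ is the length of a window that is simultaneously left- and right-minimal, so the reversed-string version of the gap argument bounds $|\{m_b\}|$ likewise. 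For the matching $\Omega(\sqrt{nq})$-bit lower bound I would design a gadget encoding $\Theta(\sqrt{nq})$ independent bits into the set of realized co-occurrence lengths via $\Theta(\sqrt{nq})$ independently controllable gaps, so that the query answers recover all the bits. I expect the two constructions---controlling the support of the difference encoding during preprocessing in part (a), and realizing $\Theta(\sqrt{nq})$ independent degrees of freedom in part (c)---to be the main obstacles.
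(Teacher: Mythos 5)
Your query structures and the reductions between the two query types are exactly the paper's: the predecessor structure over $Z$ storing the rank-indexed prefix sums of $\delta$ and of $i\,\delta(i)$, the identity $\coocc(w)=\sum_{i\le w}\lmm(i)-\max(0,w-\rep{1})$ (your $j_0$ is the paper's $\rep{1}$), and the time lower bound via a reduction from predecessor search with P\u{a}tra\c{s}cu--Thorup are all the paper's proof (Lemma~\ref{lem:coocc_eval_lmm}, Equation~\ref{eq:lmm_sum_rewrite}, Theorem~\ref{thm:time_space_tradeoff}). Your route to $d=O(\sqrt{nq})$, however, is genuinely different and arguably cleaner: the paper proves that every position lies in at most $q$ minimal co-occurrences (Lemma~\ref{lem:minimal_match_crossing_index}) and bounds the cumulative length of minimal co-occurrences by $O(nq)$, whereas you bound block extents by single-character gaps, which trivially sum to at most $nq$. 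Your key observation about the block maxima is correct and provable: $\rep{i+1}$ is precisely the next occurrence of the bottleneck character $S[\lep{i}]$ after $\lep{i}$, so $M_b=\rep{i+1}-\lep{i}$ is a character gap. The one flawed step is the block minima: reversing the string essentially reproduces the same set of block maxima (up to a boundary term), so the ``reversed-string version'' does not bound $|\{m_b\}|$. It is repairable inside your framework: by right-minimality, $S[\rep{i}]$ does not occur in $[\lep{i},\rep{i}-1]$, so $m_b$ is at most the backward gap of $S[\rep{i}]$ at $\rep{i}$, and these witnesses also sum to at most $nq$.

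The serious gap is the preprocessing claim in part (a). You flag the right issue, but the statement you propose to prove --- that the nonzero support of the running difference encoding stays $O(d)$ --- is false, so no appeal to the monotonicity of $\lep{j}$ can establish it. Take $Q=\{\A,\B\}$ with the two characters alternating in $S$, separated by gaps $h_1,h_2,\dots$; every consecutive pair is a minimal co-occurrence, contributing $+1$ to $\delta(h_t+1)$ and $-1$ to $\delta(h_t+h_{t+1}+1)$. Let $m=k^2$, $G=2m$, and take the first $m$ gaps to be $G+1,G-1,G+3,G-3,\dots,G+(m-1),G-(m-1)$: the running support is now $+1$ on all $m$ even numbers in $[m+2,3m]$ plus two entries at $2G+1,2G+3$. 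Then let the remaining gaps alternate between $A_s=m+2sk$ and $B_r=1+2r$ with $0\le s,r<k$, ordered so that consecutive pairs $(s,r)$ visit every cell of the $k\times k$ grid exactly once (such alternating walks exist; for $k=2$ take $A_0B_0A_1B_1A_0$). Each pair contributes $-1$ at $A_s+B_r+1=m+2+2(sk+r)$, cancelling all $m$ entries exactly, while the new $+1$'s land on at most $2k$ values disjoint from everything else. Up to $O(1)$ boundary terms the final parameter is $d=\Theta(k)$, yet mid-pass the running support is $m+2=\Theta(d^2)$. Note this is not only a hole in your plan: the paper's own preprocessing (Appendix~\ref{app:preprocessing}) maintains exactly this running support in a dictionary and asserts without argument that it occupies $O(d)$ space, so the paper's proof of the $O(d+q)$ bound fails on the same instance; moreover, a fooling-set variant (repeat the $j$-th first-half pair $n_j$ times and realize the matching cancellation multiplicities by an Eulerian path in the bipartite multigraph on $\{A_s\}\cup\{B_r\}$) shows that \emph{no} single-pass algorithm can achieve $O(d+q)$ working space, so this part of Theorem~\ref{thm:overall_results}(a) cannot be rescued by a better argument.

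Finally, your lower bounds are plans rather than proofs, and the missing pieces are where the paper does its real work. For the $\Omega(d)$ bound you need a family of instances realizing essentially arbitrary supports (or multiplicities) in $\delta$ with controlled junk; that is exactly the content of the increment gadget and Lemma~\ref{lem:increment_gadget}. For the $\Omega(\sqrt{nq})$-bit bound you need to pack $\Theta(k\alpha)$ free bits into a string of length $O(k\alpha^2)$ over an alphabet with $|Q|=k$ while keeping all spurious $\delta$-contributions out of the relevant range; the paper's gadgets achieve this with a parity trick (encode only even lengths, so the $-1$ contributions fall on odd values) and long runs of padding, and you acknowledge but do not supply such a construction. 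You also never address how lower bounds proved for $\lmm$-queries transfer to $\coocc$-queries; the paper handles this by storing $\rep{1}$ and differencing, $\coocc(w)-\coocc(w-1)=\lmm(w)-\max(w-\rep{1},0)+\max(w-1-\rep{1},0)$.
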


Theorem~\ref{thm:overall_results}(a) and~\ref{thm:overall_results}(b) together prove that our data structure has optimal space, and that with optimal space we cannot hope to support queries faster than $O(\log\log n)$ time. In comparison to the state of the art by Sobel et al.~\cite{SBDN+2021}, Theorem~\ref{thm:overall_results}(c) proves that we match their $O(\sqrt{nq})$ space and $O(\log\log n)$ time solution, and also that the $O(\sqrt{nq})$ space bound is tight to within polylogarithmic factors. All of our results are based on simple and intuitive combinatorial ideas that simplify the state of the art. 

Given a set $X$ of $m$ integers from a universe $U$, the \emph{static predecessor problem} is to represent $X$ such that we can efficiently answer the query $\textsf{predecessor}(x) = \max\{y \in X \mid y \leq x\}$. Tight bounds by P\u{a}tra\c{s}cu and Thorup~\cite{PT2007} imply that $O(\log\log |U|)$-time queries are optimal with $m\log^{O(1)}m$ space when $|U| = m^c$ for any constant $c > 1$. The lower bound on query time in Theorem~\ref{thm:overall_results}(b) follows from the following theorem, which in turn follows from a reduction from the predecessor problem to the (left-minimal) co-occurrence problem. 
\begin{theorem}{\label{thm:time_space_tradeoff}} Let $X \subseteq \{2,\ldots,u\}$ for some $u$ and let $|X| = m$. Let $n$, $q$, and $d$ be the parameters of the (left-minimal) co-occurrence problem as above. Given a data structure that supports $\lmm$- or $\coocc$-queries in $f_t(n,q,d)$ time using $f_s(n,q,d)$ space, we obtain a data structure that supports predecessor queries on $X$ in $O(f_t(2u^2, 2, 8m))$ time using $O(f_s(2u^2, 2, 8m))$ space.
\end{theorem}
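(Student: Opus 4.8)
The plan is to reduce the static predecessor problem to the left-minimal co-occurrence problem and then handle $\coocc$ through its relationship with $\lmm$. Given $X \subseteq \{2,\ldots,u\}$ with $|X| = m$, write $X = \{y_1 < \cdots < y_m\}$. I would build a string $S$ over the alphabet $\{\A,\B,\D\}$ with $Q = \{\A,\B\}$, so that $q = 2$, whose left-minimal co-occurrence counting function encodes the rank function $\operatorname{rank}_X(x) = |\{y \in X : y \le x\}|$. The guiding observation is that for a position $i$ with $S[i] = \A$, the left-minimal co-occurrences starting at $i$ are exactly the intervals $[i,j]$ where $j$ ranges from the first $\B$ after $i$ up to one less than the first $\A$ after $i$; hence a single $\A$ whose next $\B$ is near and whose next $\A$ is far contributes one left-minimal co-occurrence of \emph{every} length in a contiguous range whose left endpoint I can control.

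Concretely, after a front padding of $u$ copies of $\D$, I would place $m$ equally spaced copies of $\A$ at positions $p_k = u + 1 + (k-1)\cdot 2u$, and in the $k$-th block put a single $\B$ at position $p_k + y_k - 1$, filling the remaining positions with $\D$. Then the $\A$ at $p_k$ generates left-minimal co-occurrences of precisely the lengths $[y_k, 2u]$, so for every query length $x \in [2,u]$ we obtain $\lmm(x) = |\{k : y_k \le x\}| = \operatorname{rank}_X(x)$. A predecessor query on $x$ is answered by one $\lmm(x)$ query returning $r = \operatorname{rank}_X(x)$, after which I return the $r$-th smallest element of a sorted array of $X$ kept alongside the structure (and report ``no predecessor'' when $r = 0$).

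The main obstacle is that every $\B$ also starts left-minimal co-occurrences, by pairing forward with the next $\A$, and these spurious occurrences threaten to corrupt the count. The key trick is the block size $2u$: since the $\B$ of block $k$ lies at distance $y_k \le u$ from $p_k$, its distance to the next $\A$ is at least $2u - u + 1$, forcing every spurious left-minimal co-occurrence to have length at least $u+2$. Thus on the entire query range $[2,u]$ only the intended $\A$-generated occurrences survive and the identity $\lmm(x) = \operatorname{rank}_X(x)$ holds exactly. Counting the nonzero entries of the difference encoding --- $m$ from the signal onsets, one shared signal cutoff at length $2u+1$, and at most two per block from the spurious ranges --- gives $d \le 8m$, while the total length is $u + 2um \le 2u^2$, which I pad with trailing $\D$'s up to exactly $2u^2$ without altering any value on $[2,u]$ or creating new breakpoints there.

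To support a data structure answering $\coocc$ rather than $\lmm$, I would use that $\coocc(w) = \sum_{w' \le w}\lmm(w')$: each right endpoint contributes to $\coocc(w)$ exactly when its unique minimal co-occurrence has length at most $w$, and the front padding guarantees that the corresponding length-$w$ window always fits when $w \le u$, making the identity exact. Hence $\coocc(x) - \coocc(x-1) = \lmm(x) = \operatorname{rank}_X(x)$, so two $\coocc$ queries recover the rank and the same array lookup finishes the query. In both cases the reduction uses $O(1)$ queries and $O(f_s(2u^2,2,8m) + m)$ space, which equals $O(f_s(2u^2,2,8m))$ in the near-linear space regime relevant to the lower bound; invoking monotonicity of $f_t$ and $f_s$ in their arguments to replace the true parameters $n \le 2u^2$ and $d \le 8m$ by the stated values completes the reduction.
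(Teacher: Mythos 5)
Your construction is correct, but it takes a genuinely different route from the paper's, and as written it proves a slightly weaker statement than the one claimed. You encode the \emph{rank} function: one block per element $y_k$, each of fixed width $2u$, so that the intended minimal co-occurrences have lengths $y_1,\ldots,y_m$ while the spurious ones (from each block's \B{} to the next block's \A) have length $2u-y_k+2\ge u+2$, safely outside the query range; hence $\lmm(x)=|\{k : y_k\le x\}|$ for all $x\in[2,u]$, and your bounds $n\le 2u^2$, $d\le 8m$, and your $\coocc$-to-$\lmm$ conversion (valid because the front padding forces $\rep{1}>u$, killing the correction term of Lemma~\ref{lem:coocc_eval_lmm} on the query range) all check out. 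The paper instead encodes the predecessor \emph{value} directly: it concatenates $x_1$ copies of the increment gadget $G_{x_1}$, then $x_2-x_1$ copies of $G_{x_2}$, and so on, so that $\delta(x_i)=x_i-x_{i-1}$ and the prefix sum telescopes to $\lmm(x)=x_p$. The payoff of this multiplicity trick is that a single query returns the predecessor itself, so the reduction stores nothing beyond the assumed data structure (plus $\rep{1}$ in the $\coocc$ case).

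This difference is where your proof has a genuine, though repairable, gap. Translating a rank into a predecessor requires your sorted array of $X$, i.e., $\Theta(m)$ extra words, so you obtain space $O(f_s(2u^2,2,8m)+m)$ rather than the unconditional $O(f_s(2u^2,2,8m))$ of the theorem. You absorb the $+m$ by assuming $f_s(2u^2,2,8m)=\Omega(m)$ (``the near-linear space regime''), but that assumption is not part of the theorem statement; it does hold in the intended application ($f_s=d\log^{O(1)}d$), and for arbitrary correct data structures it could only be justified by invoking the paper's $\Omega(d)$ space lower bound at the specific parameter point $(2u^2,2,8m)$ --- a side condition your write-up does not discharge. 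The clean fix is exactly the paper's trick: replace ``one gadget per element'' by ``$x_i-x_{i-1}$ copies of $G_{x_i}$ per element,'' after which the auxiliary array, and with it the extra assumption, disappears.
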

In particular, if $f_s(n,q,d) = d\log^{O(1)} d$ then we obtain an $m\log^{O(1)}m$-space predecessor data structure on $X$. If also $u = m^c$, then it follows from the lower bound on predecessor queries that $f_t(2u^2,2,8m) = \Omega(\log\log u)$, which in turn implies that $f_t(n,q,d) = \Omega(\log\log n)$, proving the lower bound in Theorem~\ref{thm:overall_results}(b).

The preprocessing algorithm and the proof of Theorem~\ref{thm:time_space_tradeoff} can be found in Appendices~\ref{app:preprocessing} and~\ref{app:lower_bound_on_time}, respectively.

\subsection{Techniques}
The key technical insights that lead to our results stem mainly from the structure of $\delta$. 

To achieve the upper bound for $\lmm$-queries we use the following very simple data structure. By definition, $\lmm(w) = \sum_{i=2}^w\delta(i)$. Furthermore, by the definition of $Z$ it follows that for any  $w \in [2,n]$ we have that $\lmm(w) = \lmm(w_p)$ where $w_p$ is the predecessor of $w$ in $Z$.  Our data structure is a predecessor structure over the set of key-value pairs $\{(i,\lmm(i)) \mid i \in Z\}$ and answers $\lmm$-queries with a single predecessor query. There are linear space predecessor structures that support queries in $O(\log\log |U|)$ time~\cite{Willard1983}. Here the universe $U$ is $[2,n]$ so we match the $O(d)$ space and $O(\log\log n)$ time bound in Theorem~\ref{thm:overall_results}(a). 

Furthermore, we prove the $O(\sqrt{nq})$ upper bound on space by bounding $d = O(\sqrt{nq})$ using the following idea. In essence, each $\delta(z)$ for $z \in Z$ corresponds to some length-$z$ \emph{minimal co-occurrence}, which is a co-occurrence $[i,j]$ such that neither $[i+1,j]$ nor $[i,j-1]$ are co-occurrences (see below for the full details on $\delta$). We bound the cumulative length of all the minimal co-occurrences to be $O(nq)$; then there are at most $d = O(\sqrt{nq})$ distinct lengths of minimal co-occurrences since $d = \omega(\sqrt{nq})$ implies that the cumulative length of the minimal co-occurrences is at least $1 + \ldots + d = \Omega(d^2) = \omega(nq)$.

To also support $\coocc$-queries and complete the upper bound we give a straight-forward reduction from the co-occurrence problem to the left-minimal co-occurrence problem. We show that by extending the above data structure to also store $\sum_{i=2}^z\lmm(i)$ for each $z \in Z$, we can support $\coocc$-queries with the same bounds as for $\lmm$-queries.

On the lower bounds side, we give all the lower bounds for the left-minimal co-occurrence problem and show that they extend to the co-occurrence problem. To prove the lower bounds we exploit that we can carefully design $\lmm$-instances that result in a particular difference encoding $\delta$ by including minimal co-occurrences of certain lengths and spacing. Our lower bounds on space in Theorem~\ref{thm:overall_results}(b) and~\ref{thm:overall_results}(c) are the results of encoding a given permutation or set in $\delta$, respectively.

Finally, as mentioned above, we prove Theorem~\ref{thm:time_space_tradeoff} (and, by extension, the lower bound on query time in Theorem~\ref{thm:overall_results}(b)) by encoding a given instance of the static predecessor problem in an $\lmm$-instance such that the predecessor of an element $x$ equals $\lmm(x)$.

\section{The Left-Minimal Co-Occurrence Problem}\label{sec:lmm_problem}
\subsection{Main Idea}
Let $[i,j] \subseteq [1,n]$ be a co-occurrence of $Q$ in $S$. Recall that then each character from $Q$ occurs in $S[i,j]$ and that $[i,j]$ is left-minimal if $[i+1,j]$ is not a co-occurrence. The goal is to preprocess $S$ and $Q$ to support the query  $\lmm(w)$, which returns the number of left-minimal co-occurrences of length $w$.

We say that $[i,j]$ is a \emph{minimal co-occurrence} if neither $[i+1,j]$ nor $[i,j-1]$ are co-occurrences and we denote the $\m$ minimal co-occurrences of $Q$ in $S$ by $[\lep{1},\rep{1}],\ldots,[\lep{\m},\rep{\m}]$ where $\rep{1} < \ldots < \rep{\m}$. This ordering is unique since at most one minimal co-occurrence ends at a given index. To simplify the presentation we define $\rep{\m+1} = n+1$. Note that also $\lep{1} < \ldots < \lep{\m}$ due to the following property. 
\begin{property}{\label{prop:minimal_match_order}}
    Let $[a,b]$ and $[a',b']$ be two minimal co-occurrences. Either both $a < a'$ and $b < b'$, or both $a' < a$ and $b' < b$. 
\end{property}
\begin{proof}
    If $a < a'$ and $b \geq b'$ then $[a,b]$ strictly contains another minimal co-occurrence $[a',b']$ and can therefore not be minimal itself. The other cases are analogous.  
    
    \end{proof}
We now show that given all the minimal co-occurrences we can determine all the left-minimal co-occurrences. 

\begin{lemma}{\label{lem:lmm_start}}
    Let $[\lep{1},\rep{1}],\ldots,[\lep{\m},\rep{\m}]$ be the minimal co-occurrences of $Q$ in $S$ where $\rep{1} < \ldots < \rep{\m}$ and let $\rep{\m+1} = n+1$. Then
    \begin{itemize}
        \item[(a)] there are no left-minimal co-occurrences that end before $\rep{1}$, i.e., at an index $k < \rep{1}$, and
        \item[(b)] for each index $k$ where $\rep{i} \leq k < \rep{i+1}$ for some $i$, the left-minimal co-occurrence ending at $k$ starts at $\lep{i}$.     
    \end{itemize}
\end{lemma}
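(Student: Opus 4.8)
The plan is to prove both parts from two elementary facts about co-occurrences. The first is \emph{monotonicity}: if $[i,j]$ is a co-occurrence and $i' \le i \le j \le j'$, then $[i',j']$ is a co-occurrence, since $S[i',j'] \supseteq S[i,j]$ still contains every character of $Q$. In particular, for a fixed endpoint $k$ the set of starts $\{i : [i,k] \text{ is a co-occurrence}\}$ is downward closed, so it has a largest element $s(k)$ whenever it is nonempty, and $[s(k),k]$ is then the unique left-minimal co-occurrence ending at $k$; this already reduces part~(b) to identifying $s(k)$. The second is a \emph{shrinking} fact: every co-occurrence $[i,k]$ contains a minimal co-occurrence $[s',k']$ with $i \le s'$ and $k' \le k$. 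I would obtain it by first increasing the left endpoint to the maximal $s'$ with $[s',k]$ a co-occurrence (so $[s'+1,k]$ is not one), then decreasing the right endpoint to the minimal $k'$ with $[s',k']$ a co-occurrence (so $[s',k'-1]$ is not one). This $[s',k']$ is minimal: it is right-minimal by the choice of $k'$, and left-minimal because $[s'+1,k'] \subseteq [s'+1,k]$ is a subinterval of a non-co-occurrence and hence not a co-occurrence. Since every minimal co-occurrence is one of the $[\lep{j},\rep{j}]$, we get $[s',k'] = [\lep{j},\rep{j}]$ for some $j$.

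For part~(a), note that a left-minimal co-occurrence is in particular a co-occurrence, so it suffices to show that no co-occurrence ends before $\rep{1}$. If some co-occurrence ended at $k < \rep{1}$, the shrinking fact would produce a minimal co-occurrence $[\lep{j},\rep{j}]$ with $\rep{j} = k' \le k < \rep{1}$, contradicting that $\rep{1}$ is the smallest right endpoint of any minimal co-occurrence.

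For part~(b), I would fix $k$ with $\rep{i} \le k < \rep{i+1}$. Since $k \ge \rep{i}$, monotonicity gives that $[\lep{i},k] \supseteq [\lep{i},\rep{i}]$ is a co-occurrence, so $s(k) \ge \lep{i}$. Suppose for contradiction that $s(k) > \lep{i}$; then $[\lep{i}+1,k]$ is a co-occurrence, and shrinking it yields a minimal co-occurrence $[\lep{j},\rep{j}]$ with $\lep{j} = s' \ge \lep{i}+1 > \lep{i}$ and $\rep{j} = k' \le k$. Because the starts are strictly increasing, $\lep{j} > \lep{i}$ forces $j > i$, hence $j \ge i+1$ and $\rep{j} \ge \rep{i+1}$; but then $\rep{i+1} \le \rep{j} = k' \le k < \rep{i+1}$, a contradiction. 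Therefore $s(k) = \lep{i}$, i.e., the left-minimal co-occurrence ending at $k$ starts at $\lep{i}$.

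The main care points are twofold. First, verifying that the left-then-right shrinking procedure really lands on a \emph{minimal} co-occurrence; the subinterval argument establishing left-minimality (that $[s'+1,k']$ inherits non-co-occurrence from $[s'+1,k]$) is the step most easily overlooked. Second, matching the shrunk interval to an indexed $[\lep{j},\rep{j}]$ and then using the strict monotonicity of both endpoint sequences --- which rests on Property~\ref{prop:minimal_match_order} --- to convert a strict inequality on starts into one on ends. Everything else is routine bookkeeping with the bounds $\rep{i} \le k < \rep{i+1}$.
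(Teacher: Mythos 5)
Your proof is correct and follows essentially the same route as the paper's: shrink a co-occurrence to a minimal one inside it, then use Property~\ref{prop:minimal_match_order} (via the strict ordering of the endpoints $\lep{1} < \ldots < \lep{\m}$ and $\rep{1} < \ldots < \rep{\m}$) to derive the contradiction $\rep{i} < \rep{j} \le k < \rep{i+1}$. The only difference is presentational --- you spell out the left-then-right shrinking procedure and verify its output is minimal, which the paper compresses into ``obtainable by shrinking maximally.''
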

\begin{proof}
    See Fig.~\ref{fig:lmm_start_example} for an illustration of the proof. 
    \begin{itemize}
        \item[(a)] If $[j,k]$ is a left-minimal co-occurrence where $k < \rep{1}$, it must contain some minimal co-occurrence that ends before $\rep{1}$ --- obtainable by shrinking $[j,k]$ maximally --- leading to a contradiction. 
            \item[(b)] Let $\rep{i} \leq k < \rep{i+1}$. Then $[\lep{i}, k]$ is a co-occurrence since it contains $[\lep{i},\rep{i}]$. We show that it is left-minimal by showing that $[\lep{i}+1,k]$ is not a co-occurrence. If it were, it would contain a minimal co-occurrence $[s,t]$ where $\lep{i} < s$ and $t < \rep{i+1}$. By Property~\ref{prop:minimal_match_order}, $\lep{i} < s$ implies that $\rep{i} < t$. However, then $\rep{i} < t < \rep{i+1}$, leading to a contradiction.
        \end{itemize}
        
        \end{proof}

        \begin{figure}
            \newcommand{\ic}[2]{\underset{\mathclap{#1}}{#2}}
            
            \[
                S = 
                \lefteqn{\phantom{\cdots}\overbrace{\phantom{\cdot\cdots\circ\cdots\cdot}}^{[j,k]}}
                \cdots\cdot\cdots\underbrace{\circ\cdots\cdot\cdots\circ}_{[\lep{1},\rep{1}]}
                \cdots\cdots
                \qquad\quad
                S = \cdots\cdots 
                \ic{\lep{i}}{\circ} \overbrace{\cdot\underbrace{\cdot\cdot \ic{\rep{i}}{\circ}
                 \cdot\cdot}_{[s,t]}\cdot\,\ic{k}{\circ}}^{(\lep{1},k]}
                 \cdots \ic{\rep{i+1}}{\circ} \cdots\cdots
            \]

            \caption{\emph{Left (Lemma~\ref{lem:lmm_start}(a)):} Any left-minimal co-occurrence $[j,k]$ must contain a minimal co-occurrence ending at or before $k$. If $k < \rep{1}$ this contradicts that $\rep{1}$ is the smallest endpoint of a minimal co-occurrence. \emph{Right (Lemma~\ref{lem:lmm_start}(b)):} $[\lep{i},k]$ is a co-occurrence because it contains $[\lep{i},\rep{i}]$. However, $[\lep{i}+1,k]$ is not a co-occurrence; if it were it would contain a minimal co-occurrence $[s,t]$ that ends between $\rep{i}$ and $k$, leading to a contradiction since $\rep{i} < t < \rep{i+1}$.}
            \label{fig:lmm_start_example}
        \end{figure}
        
Let $\len(i,j) = j - i + 1$ denote the length of the interval $[i,j]$. Lemma~\ref{lem:lmm_start} implies that each minimal co-occurrence $[\lep{i},\rep{i}]$ gives rise to one additional left-minimal co-occurrence of length $k$ for $k = \len(\lep{i},\rep{i}),\ldots,\len(\lep{i},\rep{i+1})-1$. Also, note that each left-minimal co-occurrence is determined by a minimal co-occurrence in this manner. Therefore, $\lmm(w)$ equals the number of minimal co-occurrences $[\lep{i},\rep{i}]$ where $\len(\lep{i},\rep{i}) \leq w < \len(\lep{i},\rep{i+1})$. Recall that $\delta(i) = \lmm(i) - \lmm(i-1)$ for $i\in [2,n]$. Since $\lmm(1) = 0$ (because $|Q| \geq 2$) we have $\lmm(w) = \sum_{i=2}^w \delta(i)$. It follows that
\[
    \delta(w) = \sum_{i=1}^{\m} \begin{cases}
            1 & \text{if } \len(\lep{i},\rep{i}) = w \\
            -1 & \text{if } \len(\lep{i}, \rep{i+1}) = w\\
            0 & \text{otherwise } 
        \end{cases}
\]
since the contribution of each $[\lep{i},\rep{i}]$ to the sum $\sum_{i=2}^w\delta(i)$    
is one if $\len(\lep{i},\rep{i}) \leq w < \len(\lep{i},\rep{i+1})$ and zero otherwise. We say that $[\lep{i},\rep{i}]$ \emph{contributes} plus one and minus one to $\delta(\len(\lep{i},\rep{i}))$ and $\delta(\len(\lep{i}, \rep{i+1}))$, respectively.

However, note that only the non-zero $\delta(\cdot)$-entries affect the result of $\lmm$-queries. Denote the elements of $Z$ by $z_1 < z_2 < \ldots <  z_d$ and define $\pred(w)$ such that $z_{\pred(w)}$ is the predecessor of $w$ in $Z$, or $\pred(w) = 0$ if $w$ has no predecessor. We get the following lemma. 
\begin{lemma}{\label{lem:sparse_lmm_evaluation}}
For any $w \in [z_1,n]$ we have that 
\[
    \lmm(w) = \sum_{i=1}^{\pred(w)} \delta(z_i) = \lmm(z_{\pred(w)}).
\]
For $w \in [0,z_1)$, $w$ has no predecessor in $Z$ and $\lmm(w) = 0$. 
\end{lemma}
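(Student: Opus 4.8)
The plan is to prove Lemma~\ref{lem:sparse_lmm_evaluation} directly from the two facts already established just before its statement: first, that $\lmm(w) = \sum_{i=2}^w \delta(i)$ for all $w \in [2,n]$ (with $\lmm(1) = 0$ since $|Q| \geq 2$); and second, that $Z = \{i \in [2,n] \mid \delta(i) \neq 0\}$ is precisely the set of indices where $\delta$ is nonzero, with the elements enumerated as $z_1 < z_2 < \cdots < z_d$. The core observation is that summing $\delta$ over a prefix only picks up contributions from the nonzero entries, so extending or truncating the summation between consecutive elements of $Z$ does not change the value.

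First I would handle the case $w \in [z_1, n]$. By definition, $\lmm(w) = \sum_{i=2}^w \delta(i)$. Every index $i$ in the range $[2,w]$ with $\delta(i) \neq 0$ must lie in $Z$ by definition of $Z$, and moreover must be one of $z_1, \ldots, z_{\pred(w)}$, since $z_{\pred(w)}$ is the largest element of $Z$ not exceeding $w$. Conversely, each $z_1, \ldots, z_{\pred(w)}$ lies in $[2,w]$ (note $z_1 \geq 2$ because $Z \subseteq [2,n]$, and $z_{\pred(w)} \leq w$ by the definition of the predecessor). Since all the omitted terms $\delta(i)$ for $i \in [2,w] \setminus Z$ are zero, dropping them leaves the sum unchanged, giving $\lmm(w) = \sum_{i=1}^{\pred(w)} \delta(z_i)$. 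The final equality $\sum_{i=1}^{\pred(w)}\delta(z_i) = \lmm(z_{\pred(w)})$ follows by applying the same argument in reverse at the point $z_{\pred(w)}$: the nonzero entries of $\delta$ in $[2, z_{\pred(w)}]$ are exactly $z_1, \ldots, z_{\pred(w)}$, so $\lmm(z_{\pred(w)}) = \sum_{i=2}^{z_{\pred(w)}}\delta(i) = \sum_{i=1}^{\pred(w)}\delta(z_i)$.

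For the second case, $w \in [0, z_1)$, there is no element of $Z$ at most $w$, so $\pred(w) = 0$ and the sum over $i = 1, \ldots, 0$ is empty. Since $z_1$ is the smallest index with $\delta(z_1) \neq 0$, every $\delta(i)$ for $i \in [2,w]$ vanishes, so $\lmm(w) = \sum_{i=2}^w \delta(i) = 0$ (and for $w \in \{0,1\}$ the claim $\lmm(w) = 0$ holds by the base cases $\lmm(0) = \lmm(1) = 0$).

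I expect this lemma to be essentially a bookkeeping exercise rather than one with a genuine obstacle; the only point requiring care is making the predecessor indexing precise, namely verifying that the nonzero entries of $\delta$ within $[2,w]$ are in bijection with $z_1, \ldots, z_{\pred(w)}$ and that the endpoints line up (in particular that $z_1 \geq 2$ and that $w$ itself, if it lies in $Z$, is correctly counted as its own predecessor). Once that correspondence is pinned down, both displayed equalities are immediate consequences of discarding zero summands.
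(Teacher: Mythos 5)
Your proof is correct and follows exactly the same route as the paper's (much terser) proof: both rest on the identity $\lmm(w) = \sum_{i=2}^w \delta(i)$ together with the fact that $\delta(i) = 0$ for $i \notin Z$, so that dropping zero summands collapses the prefix sum onto $z_1,\ldots,z_{\pred(w)}$. Your version merely spells out the predecessor bookkeeping and edge cases that the paper leaves implicit.
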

\begin{proof}
The proof follows from the fact that $\lmm(w) = \sum_{i=2}^w \delta(i)$ and $\delta(i) = 0$ for each $i \not\in Z$.

\end{proof}

\subsection{Data Structure}
The contents of the data structure are as follows. Store the linear space predecessor structure from~\cite{Willard1983} over the set $Z$, and for each key $z_i \in Z$ store the data $\lmm(z_i)$. To answer $\lmm(w)$, find the predecessor $z_{\pred(w)}$ of $w$ and return $\lmm(z_{\pred(w)})$.  Return $0$ if $w$ has no predecessor.

The correctness of the query follows from Lemma~\ref{lem:sparse_lmm_evaluation}. The query time is $O(\log\log |U|)$~\cite{Willard1983} which is $O(\log\log n)$ since the universe is $[2,n]$, i.e., the domain of $\delta$. The predecessor structure uses $O(|Z|) = O(d)$ space, which we now show is $O(\sqrt{nq})$. We begin by bounding the cumulative length of the minimal co-occurrences. 

\begin{lemma}{\label{lem:minimal_match_crossing_index}}
    Let $[\lep{1},\rep{1}],\ldots,[\lep{\m},\rep{\m}]$ be the minimal co-occurrences of $Q$ in $S$. Then 
    \[
        \sum_{i=1}^{\m} \len(\lep{i},\rep{i}) = O(nq).
    \] 
\end{lemma}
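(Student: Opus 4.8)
The plan is to rewrite the sum as a count over positions and then bound the per-position count by $q$. Swapping the order of summation, I would write $\sum_{i=1}^{\m}\len(\lep{i},\rep{i}) = \sum_{p=1}^{n} c_p$, where $c_p$ is the number of minimal co-occurrences $[\lep{i},\rep{i}]$ that contain position $p$ (i.e.\ $\lep{i} \le p \le \rep{i}$). It then suffices to prove $c_p \le q$ for every $p$, since this yields $\sum_i \len(\lep{i},\rep{i}) \le nq = O(nq)$.

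To bound $c_p$, first observe that since $\lep{1} < \cdots < \lep{\m}$ and $\rep{1} < \cdots < \rep{\m}$ (the latter by assumption, the former by Property~\ref{prop:minimal_match_order}), the set of minimal co-occurrences containing $p$ is a contiguous block of indices: $\{i : \lep{i} \le p\}$ is a prefix and $\{i : \rep{i} \ge p\}$ is a suffix, so their intersection is some interval $[a,b]$ with $c_p = b-a+1$.

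The key step is to argue that the left-endpoint characters $S[\lep{a}], S[\lep{a+1}], \ldots, S[\lep{b}]$ are pairwise distinct and all lie in $Q$; this forces $c_p \le |Q| = q$. That each $S[\lep{i}]$ lies in $Q$ and occurs exactly once in $S[\lep{i},\rep{i}]$ is immediate from minimality: since $[\lep{i}+1,\rep{i}]$ is not a co-occurrence, deleting $S[\lep{i}]$ must destroy the only copy of some required character, so $S[\lep{i}] \in Q$ and $S[\lep{i}]$ appears nowhere else in $[\lep{i},\rep{i}]$. For distinctness, suppose $S[\lep{i}] = S[\lep{j}]$ for $a \le i < j \le b$. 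I would show $\lep{j} \in (\lep{i},\rep{i}]$: indeed $\lep{i} < \lep{j}$, and $\lep{j} \le p \le \rep{i}$ (using $\lep{j} \le \lep{b} \le p$ and $\rep{i} \ge \rep{a} \ge p$). But then $S[\lep{j}]$ is a second occurrence of $S[\lep{i}]$ inside $[\lep{i},\rep{i}]$, contradicting minimality. Hence all $b-a+1$ characters are distinct, so $c_p \le q$.

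The main obstacle is pinning down this distinctness argument; everything else is bookkeeping. It hinges on the interplay between the monotonicity of the endpoints (Property~\ref{prop:minimal_match_order}) and the fact that the left endpoint of a minimal co-occurrence carries a character of $Q$ occurring nowhere else inside it. Once $c_p \le q$ is established, summing over all $n$ positions immediately gives the claimed $O(nq)$ bound.
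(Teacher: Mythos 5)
Your proposal is correct and matches the paper's proof essentially step for step: both double-count the sum as $\sum_{p} c_p$ over positions and reduce to showing $c_p \le q$, both observe that each left endpoint $\lep{i}$ must carry a character of $Q$ occurring nowhere else in $[\lep{i},\rep{i}]$ (else the window could be shrunk from the left), and both use the ordering from Property~\ref{prop:minimal_match_order} to show that two minimal co-occurrences containing a common position cannot have equal left-endpoint characters. The only cosmetic difference is that the paper phrases the last step as a pigeonhole contradiction (assuming $q' > q$ windows contain the position, two left endpoints share a character), whereas you prove pairwise distinctness of the left-endpoint characters directly; these are the same argument in contrapositive form.
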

\begin{proof}
    We prove that for each $k \in [1,n]$ there are at most $q$ minimal co-occurrences $[\lep{i},\rep{i}]$ where $k \in [\lep{i},\rep{i}]$; the statement in the lemma follows directly. Suppose that there are $q' > q$ minimal co-occurrences $[s_1,t_1],\ldots,[s_{q'},t_{q'}]$ that contain $k$ and let $t_1 < \ldots < t_{q'}$. By Property~\ref{prop:minimal_match_order}, and because each minimal occurrence contains $k$, we have
    \[s_1 < \ldots < s_{q'} \leq \; k  \; \leq t_1 < \ldots < t_{q'}\]
     Furthermore, for each $s_i$ we have that $S[s_i] = p$ for some $p \in Q$; otherwise $[s_i+1,t_i]$ would be a co-occurrence and $[s_i,t_i]$ would not be minimal. Since $q' > q$ there is some $p \in Q$ that occurs twice as the first character, i.e., such that $S[s_i] = S[s_j] = p$ for some $i < j$. However, then $[s_i+1,t_i]$ is a co-occurrence because it still contains $S[s_j] = p$, contradicting that $[s_i,t_i]$ is minimal.
     
\end{proof}

By the definition of $\delta$, we have that $\delta(k) \neq 0$ only if there is some minimal co-occurrence $[\lep{i},\rep{i}]$ such that either $\len(\lep{i},\rep{i}) = k$ or $\len(\lep{i},\rep{i+1}) = k$. Using this fact in conjunction with Lemma~\ref{lem:minimal_match_crossing_index} we bound the sum of the elements in $Z$.
\begin{align*}
    \sum_{z \in Z} z = \sum_{k \text{ where } \delta(k) \neq 0} k \quad
    & \leq \sum_{i=1}^{\m} \len(\lep{i},\rep{i}) + \len(\lep{i},\rep{i+1})\\ 
    &= \sum_{i=1}^{\m} \len(\lep{i},\rep{i}) + \Big( \len(\lep{i},\rep{i}) + \len(\rep{i} + 1, \rep{i+1})\Big)\\
    &= \sum_{i=1}^{\m} 2\cdot\len(\lep{i},\rep{i}) + \sum_{i=1}^{\m} \len(\rep{i}+1,\rep{i+1})\\
    &= O(nq) + O(n)
\end{align*}
Since the sum over $Z$ is at most $O(nq)$ we must have $d = O(\sqrt{nq})$, because with $d = \omega(\sqrt{nq})$ distinct elements in $Z$ we have 
\[
    \sum_{z \in Z} z \geq 1 + 2 + \ldots + d = \Omega(d^2) = \omega(nq).    
\]

\section{The Co-Occurrence Problem}\label{sec:coocc_problem}
Recall that $\coocc(w)$ is the number of co-occurrences of length $w$, as opposed to the number of left-minimal co-occurrences of length $w$. That is, $\coocc(w)$ counts the number of co-occurrences among the intervals $[1,w], [2,w+1],\ldots,[n-w+1,n]$. We reduce the co-occurrence problem to the left-minimal co-occurrence problem as follows. 

\begin{lemma} \label{lem:coocc_eval_lmm}

Let $S$ be a string over an alphabet $\Sigma$, let $Q \subseteq \Sigma$ and let $\lmm$ be defined as above. Then
\[
  \coocc(w) = \left(\sum_{i=2}^w \lmm(i)\right) \;-\; \max(w - \rep{1},0) .
\]

\end{lemma}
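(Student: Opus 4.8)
The plan is to count the co-occurrences of length $w$ by grouping them according to their right endpoint and matching each such endpoint with the unique left-minimal co-occurrence that ends there. By Lemma~\ref{lem:lmm_start}, every index $t \geq \rep{1}$ is the right endpoint of exactly one left-minimal co-occurrence (and no left-minimal co-occurrence ends before $\rep{1}$); write $\lm(t)$ for its start, so that the co-occurrences ending at $t$ are precisely the intervals $[s,t]$ with $s \leq \lm(t)$. This is the structural fact that drives the whole argument.

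First I would establish a bijection between the co-occurrences of length $w$ and the left-minimal co-occurrences $[s,t]$ with $\len(s,t) \leq w$ and right endpoint $t \geq w$. The forward map sends a length-$w$ co-occurrence $[i,\,i+w-1]$ to the unique left-minimal co-occurrence $[\lm(t),t]$ ending at $t = i+w-1$; since $[\lm(t),t]$ is the shortest co-occurrence ending at $t$, its length is at most $w$, and plainly $t \geq w$. The inverse sends a left-minimal co-occurrence $[s,t]$ with $\len(s,t) \leq w$ and $t \geq w$ to the interval $[t-w+1,\,t]$: the inequalities $t \geq w$ and $\len(s,t) \leq w$ force $t-w+1$ to lie in $[1,s]$, so $[t-w+1,t]$ contains the co-occurrence $[s,t]$ and is therefore itself a co-occurrence, of length exactly $w$. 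Checking that these two maps are mutually inverse is routine, using the uniqueness of the left-minimal co-occurrence ending at a given index.

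Given the bijection, $\coocc(w)$ equals the number of left-minimal co-occurrences of length at most $w$ whose right endpoint is at least $w$. Now $\sum_{i=2}^w \lmm(i)$ counts \emph{all} left-minimal co-occurrences of length at most $w$ (recall $\lmm(1)=0$, and every left-minimal co-occurrence has length at least $2$ since $q \geq 2$), so it remains only to subtract those whose right endpoint is strictly less than $w$. Here I would observe that any left-minimal co-occurrence ending at $t < w$ automatically has length at most $t < w$, so the length condition is vacuous for this group; by Lemma~\ref{lem:lmm_start} there is exactly one such co-occurrence for each endpoint $t$ with $\rep{1} \leq t < w$, giving $\max(w - \rep{1},\,0)$ in total and yielding the claimed identity.

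The step I would treat most carefully — and the only delicate point — is the boundary bookkeeping: confirming that the forward map always lands inside the target set (both the length-$\leq w$ and endpoint-$\geq w$ constraints), and that the subtracted group counts each excluded left-minimal co-occurrence exactly once while its length condition drops out. Everything else reduces cleanly to Lemma~\ref{lem:lmm_start}, which already fixes the start and the uniqueness of the left-minimal co-occurrence at each endpoint.
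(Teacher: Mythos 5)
Your proof is correct and takes essentially the same approach as the paper's: both count co-occurrences by their right endpoints using Lemma~\ref{lem:lmm_start}, observe that $\sum_{i=2}^w \lmm(i)$ counts all left-minimal co-occurrences of length at most $w$, and subtract the $\max(w-\rep{1},0)$ overcount coming from endpoints in $[\rep{1},w-1]$, where no length-$w$ substring of $S$ ends. The only difference is presentational --- you spell out as an explicit bijection what the paper states informally as a per-index ``if and only if.''
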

\begin{proof}
For any index $k \geq w$, the length-$w$ interval $[k-w+1, k]$ ending at $k$ is a co-occurrence if and only if the length of the left-minimal co-occurrence ending at index $k$ is at most $w$. The sum $\sum_{i=2}^w\lmm(i)$ counts the number of indices $j \in [1,n]$ such that the left-minimal co-occurrence ending at $j$ has length at most $w$. However, this also includes the left-minimal co-occurrences that end at any index $j \in [\rep{1},w-1]$. While all of these have length at most $w-1$, none of the length-$w$ intervals that end in the range $[\rep{1},w-1]$ correspond to substrings of $S$, so they are not co-occurrences. Therefore, the sum $\sum_{i=2}^w\lmm(i)$ overestimates $\coocc(w)$ by $w - \rep{1}$ if $\rep{1} < w$ and by $0$ otherwise.

\end{proof}

We show how to represent the sequence $\sum_{i=2}^2\lmm(i),\ldots,\sum_{i=2}^n\lmm(i)$ compactly, in a similar way to what we did for $\lmm$-queries. Recall that the elements of $Z$ are denoted by $z_1 < \ldots < z_d$, that $z_{\pred(x)}$ is the predecessor of $x$ in $Z$, and that $\pred(x) = 0$ if $x$ has no predecessor. Then, for any $w \geq 2$ we get that
\begin{equation}{\label{eq:lmm_sum_rewrite}}
     \begin{split}
     \sum_{i=2}^w\lmm(i) &= \sum_{i=2}^w\sum_{j=1}^{\pred(i)} \delta(z_j)\\
     &= \sum_{k=1}^{\pred(w)}\delta(z_k)(w - z_k + 1)\\
     &= (w+1)\underbrace{\sum_{k=1}^{\pred(w)}\delta(z_k)}_{\lmm(w)} - \sum_{k=1}^{\pred(w)}z_k\delta(z_k)
     \end{split}
\end{equation}
The first step follows by Lemma~\ref{lem:sparse_lmm_evaluation} and the second step follows because $\delta(z_k)$ occurs in $\sum_{j=1}^{\pred(i)}\delta(z_j)$ for each of the $w - z_k +1$ choices of $i \in [z_k,w]$.

To also support $\coocc$-queries we extend our data structure from before as follows. For each $z_k$ in the predecessor structure we store $\sum_{i=1}^k z_i\delta(z_i)$ in addition to $\lmm(z_k)$. We also store $\rep{1}$. Using Lemma~\ref{lem:coocc_eval_lmm} and Equation~\ref{eq:lmm_sum_rewrite} we can then answer $\coocc$-queries with a single predecessor query and a constant amount of extra work, taking $O(\log\log n)$ time. The space remains $O(d) = O(\sqrt{nq})$.  This completes the proof of Theorem~\ref{thm:overall_results}(a), as well as the upper bound on space from Theorem~\ref{thm:overall_results}(c). 

\section{Lower Bounds}\label{sec:lower_bounds}
In this section we show lower bounds on the space complexity of data structures that support $\lmm$- or $\coocc$-queries. In Section~\ref{sec:increment_gadget} we introduce a gadget that we use in Section~\ref{sec:lower_bound_space_d} to prove that any data structure supporting $\lmm$- or $\coocc$-queries needs $\Omega(d)$ space (we use the same gadget in Appendix~\ref{app:lower_bound_on_time} to prove Theorem~\ref{thm:time_space_tradeoff}). In Section~\ref{sec:lower_bound_space_n_and_q} we prove that any solution to the (left-minimal) co-occurrence problem requires $\Omega(\sqrt{nq})$ bits of space in the worst case.

All the lower bounds are proven by reduction to the left-minimal co-occurrence problem. However, they extend to data structures that support $\coocc$-queries by the following argument. Store  $\rep{1}$ and any data structure that supports $\coocc$ on $S$ and $Q$ in time $t$ per query. Then this data structure supports $\lmm$-queries in $O(t)$ time, because by Lemma~\ref{lem:coocc_eval_lmm} we have that
\begin{align*}
    \coocc & (w)\,- \,\coocc(w-1) \\
    &= \left(\sum_{i=2}^w \lmm(i) \;-\; \max(w-\rep{1},0)\right) - \left(\sum_{i=2}^{w-1} \lmm(i) \;-\; \max(w-1-\rep{1},0)\right)\\
    &= \lmm(w) - \max(w-\rep{1},0) + \max(w-1-\rep{1},0)
\end{align*}

\subsection{The Increment Gadget}\label{sec:increment_gadget}
Let $Q = \{\A, \B\}$ and $U = \{2,\ldots, u\}$. For each $i \in U$ we define the \emph{increment gadget} 
\[
    G_i = \underbrace{\A\s\garb\s\B}_{i}\s\underbrace{\garb}_{u}
\]
where $\garb$ denotes a sequence of characters that are not in $Q$. 

\begin{lemma}{\label{lem:increment_gadget}}
    Let $Q = \{\A, \B\}$,  $U = \{2,\ldots,u\}$, and let $G_i$ be defined as above.
    Furthermore, for some $E = \{e_1,e_2,\ldots,e_m\} \subseteq U$ let $S$ be the concatenation of $c_1 > 0$ copies of $G_{e_1}$, with $c_2 > 0$ copies of $G_{e_2}$, and so on. That is,
    \[
        S = \underbrace{G_{e_1}\cdots G_{e_1}}_{c_1}\s\ldots\ldots\ldots\s\underbrace{G_{e_m}\cdots G_{e_m}}_{c_m}
    \]
    Then $\delta(e_i) = c_i$ for each $e_i \in E$ and $\delta(e) = 0$ for any $e \in  U\setminus E$. Furthermore, $m \leq d \leq 8m$ and $n \leq 2uC$ where $C = \sum_{i=1}^m c_i$ is the number of gadgets in $S$. 
\end{lemma}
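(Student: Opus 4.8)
The plan is to analyze the increment gadget $G_i$ directly and determine exactly which minimal co-occurrences it produces, then assemble the global string $S$ and track how these minimal co-occurrences contribute to $\delta$ via the plus-one/minus-one rule established earlier in the excerpt. The central observation I would verify first is that each copy of a gadget $G_i = \underbrace{\A\s\garb\s\B}_{i}\s\underbrace{\garb}_{u}$ contributes exactly one minimal co-occurrence of length $i$: since $Q = \{\A,\B\}$, a minimal co-occurrence must start at the \A\ and end at the next \B\ (the only way to include both characters while being minimal on both ends), and the garbage characters in between and afterward guarantee no shorter co-occurrence exists within the block. I would confirm that the $u$ trailing garbage characters fully separate consecutive gadgets, so that no minimal co-occurrence spans two gadget copies.

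Next I would compute $\delta$. By the formula for $\delta(w)$ from the excerpt, each minimal co-occurrence $[\lep{i},\rep{i}]$ contributes $+1$ to $\delta(\len(\lep{i},\rep{i}))$ and $-1$ to $\delta(\len(\lep{i},\rep{i+1}))$. Here the key structural point is that all minimal co-occurrences have the form ``length equal to the gadget index,'' and the gap to the next minimal co-occurrence's \emph{left} endpoint is always a full gadget width $i + u$ (the length of $G_i$), so that $\len(\lep{i},\rep{i+1})$ — the distance from one \A\ to the \emph{next} \B — is $i + (i+u) = 2i + u$ within a run of identical gadgets, or shifts appropriately at the boundary between a run of $G_{e_j}$'s and a run of $G_{e_{j+1}}$'s. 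I would argue that the $+1$'s at length $e_i$ accumulate to give $\delta(e_i) = c_i$, while all the $-1$'s (and the boundary $+1$/$-1$ terms) land on lengths strictly larger than any $e \in U$ — specifically on values of the form $2i + u$ or larger, which exceed $u = \max U$. This is what yields $\delta(e) = 0$ for $e \in U \setminus E$: no minimal co-occurrence has length equal to such an $e$, and no $-1$ contribution lands on a value inside $U$.

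For the bounds I would proceed as follows. The claim $\delta(e_i) = c_i$ immediately gives $m \le d$, since each of the $m$ distinct values $e_i$ is a nonzero entry of $\delta$, hence in $Z$. For the upper bound $d \le 8m$, I would count the total number of distinct lengths appearing across all the $+1$ and $-1$ contributions: each gadget type produces $+1$ at length $e_i$ and $-1$ at one or two larger lengths, and transitions between runs introduce a bounded number of extra distinct lengths; bounding this by $8m$ is a matter of checking that each of the $m$ runs contributes a constant number of distinct nonzero-$\delta$ positions. Finally, $n \le 2uC$ follows from $\len(G_i) = i + u \le u + u = 2u$ together with the total gadget count $C = \sum_i c_i$, so $n = \sum \len(G_{e_i})\cdot(\text{copies}) \le 2uC$.

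I expect the main obstacle to be the careful bookkeeping at the \emph{boundaries} between runs of distinct gadgets, where the trailing $-1$ of the last copy of $G_{e_j}$ interacts with the leading structure of the first copy of $G_{e_{j+1}}$. Within a homogeneous run the pattern is clean and the $-1$'s telescope or cancel predictably, but I would need to verify that the boundary contributions (i) never create a spurious nonzero $\delta$ at some $e \in U \setminus E$, which would break the main claim, and (ii) introduce only $O(1)$ new distinct nonzero positions per boundary, which is what keeps $d$ within the $8m$ bound. Establishing the exact lengths $\len(\lep{i},\rep{i+1})$ at these seams, and confirming they all exceed $u$, is the delicate step; once that is pinned down, the two claimed bounds on $d$ and the bound on $n$ are straightforward consequences.
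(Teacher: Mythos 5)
Your central structural claim is false, and it is precisely the subtlety this gadget is designed around. You assert that a minimal co-occurrence ``must start at the \A\ and end at the next \B'' and that ``no minimal co-occurrence spans two gadget copies.'' Since $Q = \{\A,\B\}$, any interval containing one \A\ and one \B\ that cannot be shrunk on either side is a minimal co-occurrence; in particular, the interval starting at the \B\ of one gadget $G_j$ and ending at the \A\ of the following gadget $G_k$ \emph{is} a minimal co-occurrence, of length $u+2$ (the \B, the $u$ trailing garbage characters of $G_j$, then the \A). The paper's proof enumerates exactly these: per adjacent pair $G_jG_k$ there are three relevant minimal co-occurrences, and the spanning one contributes $+1$ to $\delta(u+2)$; moreover its presence changes where the $-1$ contributions land, since the $-1$ of a minimal co-occurrence is determined by the right endpoint of the \emph{next} minimal co-occurrence in the ordering. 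Thus the A-to-B co-occurrence of $G_j$ contributes $-1$ to $\delta(j+u+1)$ (distance from its \A\ to the next gadget's \A), not to $\delta(2j+u)$ as in your bookkeeping, and the spanning co-occurrence contributes $-1$ to $\delta(k+u+1)$. The role of the $u$ trailing garbage characters is not to prevent spanning co-occurrences but to force all of these stray entries ($u+2$, $j+u+1$, $k+u+1$) above $u$, hence outside $U$.

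By luck, all of your misplaced contributions also lie above $u$, so your conclusion that $\delta(e_i)=c_i$ and $\delta(e)=0$ on $U\setminus E$ is not contradicted; but the verification you yourself flag as the delicate step (confirming the gadgets are ``fully separated'') would fail if carried out, and your enumeration of the nonzero positions of $\delta$ --- which is what $d$ counts, \emph{including} positions outside $U$ --- is wrong: it misses $u+2$ and the entries $e_i+u+1$, and includes entries such as $2e_i+u$ that are generically zero. The counting strategy for $d\le 8m$ is recoverable once the enumeration is fixed; the paper does it by noting that each distinct adjacent gadget pair contributes to at most four $\delta$-entries and that there are at most $2m$ distinct adjacent pairs (each $G_{e_i}$ is followed only by $G_{e_i}$ or $G_{e_{i+1}}$), giving $d\le 8m$. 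Your arguments for $m\le d$ and $n\le 2uC$ are fine as stated.
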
 
\begin{proof}
     Firstly, $|G_j| = j + u  \leq 2u$ since $j \in U$, so the combined length of the $C$ gadgets is at most $2uC$. 

     Now we prove that $\delta(e_i) = c_i$ for each $e_i \in E$ and $\delta(e) = 0$ for each $e \in U\setminus E$. Consider two gadgets $G_j$ and $G_k$ that occur next to each other in $S$. 
     \[
         \overbrace{\underbrace{\A\s\garb\s\B}_{j}\s\underbrace{\garb}_{u}}^{G_j}\s\overbrace{\underbrace{\A\s\garb\s\B}_{k}\s\underbrace{\garb}_{u}}^{G_k}
     \]
     Three of the minimal co-occurrences in $S$ occur in these two gadgets. Denote them by $[s_1,t_1]$,$[s_2,t_2]$ and $[s_3,t_3]$. 
     \[
        \lefteqn{\underbrace{\phantom{\A\s\garb\s\B}}_{[s_1,t_1]}}
        \A\s\garb\s
        \lefteqn{\overbrace{\phantom{\B\s\garb\s\A}}^{[s_2,t_2]}}
        \B\s\garb\s
        \underbrace{\A\s\garb\s\B}_{[s_3,t_3]}\s\garb
     \]
     The two first minimal co-occurrences start in $G_j$. They contribute
     \begin{itemize}
         \item plus one to $\delta(x)$ for $x \in \{\len(s_1,t_1),\len(s_2,t_2)\} = \{j,u+2\}$.
         \item minus one to $\delta(x)$ for $x \in \{\len(s_1,t_2),\len(s_2,t_3)\} = \{j + u + 1, k + u + 1\}$. 
     \end{itemize}
     Hence, each occurrence of $G_j$ contributes plus one to $\delta(j)$, and the remaining contributions are to $\delta(x)$ where $x \not\in U$. The argument is similar also for the last gadget in $S$ that has no other gadget following it. For each $e_i \in E$ there are $c_i$ occurrences of $G_{e_i}$ so $\delta(e_i) = c_i$. For each $e \in U\setminus E$ there are no occurrences of $G_e$ so $\delta(e) = 0$. 
    
     Finally, note that each occurrence of $G_jG_k$ at different positions in $S$ contributes to the same four $\delta(\cdot)$-entries. Therefore the number of distinct non-zero $\delta(\cdot)$-entries is linear in the number of distinct pairs $(j,k)$ such that $G_j$ and $G_k$ occur next to each other in $S$. Here we have no more than $2m$ distinct paris since $G_{e_i}$ is followed either by $G_{e_i}$ or $G_{e_{i+1}}$. Each distinct pair contributes to at most four $\delta(\cdot)$-entries so $d \leq 8m$. Finally each $G_{e_i}$ contributes at least to $\delta(e_i)$ so $m \leq d$, concluding the proof.
     
\end{proof}

\subsection{Lower Bound on Space}\label{sec:lower_bound_space_d}
We prove that any data structure supporting $\lmm$-queries needs $\Omega(d)$ space in the worst case. Let $U$ and $Q$ be defined as in the increment gadget and let $P = p_2,\ldots,p_m$ be a sequence of length $m-1$ where each $p_i \in U$ (the first element is named $p_2$ for simplicity). We let $S$ be the concatenation of $p_2$ occurrences of $G_2$, with $p_3$ occurrences of $G_3$, and so on. That is,
\[
    S = \underbrace{G_2\ldots G_2}_{p_2}\s\cdots\cdots\cdots\s\underbrace{G_m\ldots G_m}_{p_m}
\]
Then any data structure supporting $\lmm$ on $S$ and $Q$ is a representation of $P$; by Lemma~\ref{lem:increment_gadget} we have that $\delta(i) = p_i$ for $i \in [2,m]$  and by definition we have $\delta(i) = \lmm(i) - \lmm(i-1)$.

The sequence $P$ can be any one of $(u-1)^{m-1}$ distinct sequences, so any representation of $P$ requires
\[
    \log((u-1)^{m-1}) = (m-1)\log(u-1) = \Omega(m\log u)
\] bits --- or $\Omega(m)$ words --- in the worst case. By Lemma~\ref{lem:increment_gadget} this is $\Omega(d)$.

\subsection{\texorpdfstring{Lower Bound on Space in Terms of $\boldsymbol{n}$ and $\boldsymbol{q}$}{Lower Bound on Space in Terms of n and q}}\label{sec:lower_bound_space_n_and_q}
Here we prove that any data structure supporting $\lmm$ needs $\Omega(\sqrt{nq})$ bits of space in the worst case. 

The main idea is as follows. Given an integer $\alpha$ and some $k \in \{2,\ldots,\alpha\}$, let $V$ be the set of \emph{even} integers from $\{k+1,\ldots, k\alpha\}$, and let $T$ be some subset of $V$. We will construct an instance $S$ and $Q$ where 
\begin{itemize}
    \item the size of $Q$ is $q = k$
    \item the length of $S$ is $n = O(k\alpha^2)$
    \item for each $i \in V$ we have $\delta(i) = 1$ if and only if $i \in T$. 
\end{itemize}
Then, as above, any data structure supporting $\lmm$-queries on $S$ and $Q$ is a representation of $T$ since $\delta(i) = \lmm(i) - \lmm(i-1)$. There are $2^{\Omega(k\alpha)}$ choices for $T$, so any representation of $T$ requires 
\[
    \log 2^{\Omega(k\alpha)} = \Omega(k\alpha) = \Omega(\sqrt{k^2\alpha^2}) = \Omega(\sqrt{nq})
\]
bits in the worst case.

The reduction is as follows. Let $Q = \{\C_1,\ldots,\C_k\}$ and let $\texttt{\$}$ be a character not in $Q$. Assume for now that $|T|$ is a multiple of $k-1$ and partition $T$ arbitrarily into $t = O(\alpha)$ sets $T_1,\ldots,T_t$, each of size $k-1$. Consider $T_j = \{e_1,\ldots,e_{k-1}\}$ where $e_1 < e_2 < \ldots < e_{k-1}$. We encode $T_j$ in the gadget $R_j$ where 
\begin{itemize}
    \item the length of $R_j$ is $3k\alpha$.
    \item $R_j[1,k] = \C_1\C_2\ldots\C_{k}$.
    \item $R_j[i + e_i] = \C_i$ for each $\C_i$ except $\C_k$. This is always possible since $i + e_i < (i+ 1) + e_{i+1}$.
    \item all other characters are \texttt{\$}.
\end{itemize}

\begin{figure}
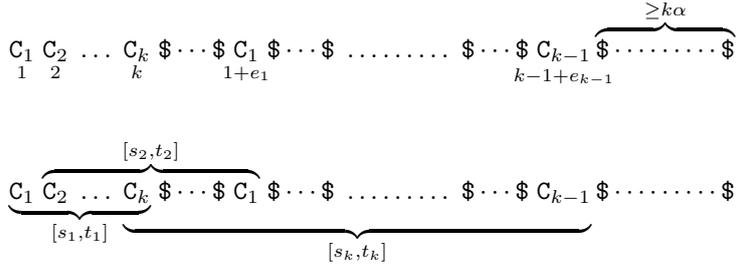

    \[
    \underset{1}{\C_1}\s\underset{2}{\C_2}\s\ldots\s\underset{k}{\C_k}\s\garb\s\underset{\mathclap{1 + e_1}}{\C_1}\s\garb\s\ldots\ldots\ldots\s\garb\s\underset{\mathclap{k-1 + e_{k-1}}}{\C_{k-1}}\overbrace{\texttt{\$}\cdots\cdots\cdots\texttt{\$}}^{\geq k\alpha}    
    \]\\
    \[
        \lefteqn{\underbrace{\phantom{\C_1\s\C_2\s\ldots\s\C_k}}_{[s_{1},t_{1}]}}
        \C_1\s
        \lefteqn{\overbrace{\phantom{\C_2\s\ldots\s\C_k\s\garb\s\C_1}}^{[s_{2},t_{2}]}}
        \C_2\s\ldots\s
        \underbrace{\underset{\phantom{a}}{\C_k}\s\garb\s\C_1\s\garb\s\ldots\ldots\ldots\s\garb\s\C_{k-1}}_{[s_{k},t_{k}]}
        \texttt{\$}\cdots\cdots\cdots\texttt{\$}
    \]
    \caption{\emph{Top:} Shows the layout of the gadget $R_j$, where $\garb$ denotes a sequence of characters not in $Q$. The first $k$ characters are $\C_1\ldots\C_k$. For $i \in [1,k-1]$ there is another occurrence of $\C_i$ at index $i + e_i$. Note that the second occurrence of $\C_1$ occurs before the second occurrence of $\C_2$, and so on. All other characters are $\texttt{\$}$. Since $|R_j| = 3k\alpha$ and $k-1 + e_{k-1} \leq 2k\alpha$, $R_j$ ends with at least $k\alpha$ characters that are not in $Q$. \emph{Bottom:} Shows the $k$ minimal co-occurrences in $R_j$ denoted by $[s_{1},t_{1}],\ldots,[s_{k},t_{k}]$. Each of the $k-3$ minimal co-occurrences that are not depicted start at the first occurrence of some $\C_i$ and ends at the second occurrence of $\C_{i-1}$.}
    \label{fig:nq_gadget_illustration}
\end{figure}

See Fig.~\ref{fig:nq_gadget_illustration} for an illustration both of the layout of $R_j$ and of the minimal co-occurrences contained within it. There are $k$ minimal co-occurrences contained in $R_j$ which we denote by $[s_{1},t_{1}],\ldots,[s_{k},t_{k}]$.  
\begin{itemize}
    \item The first one, $[s_{1},t_{1}] = [1,k]$, starts and ends at the first occurrence of $\C_1$ and $\C_k$, respectively.
    \item For each $i \in [2,k]$, the minimal co-occurrence $[s_{i},t_{i}]$ starts at the first occurrence of $\C_i$ and ends at the second occurrence of $\C_{i-1}$, i.e., $[s_i,t_i] = [i, i-1 + e_{i-1}]$. 
\end{itemize} 

Consider how these minimal co-occurrences contribute to $\delta$. Each $[s_{i},t_{i}]$ contributes plus one to $\delta(\len(s_{i},t_{i}))$. For the first co-occurrence, $\len(s_{1},t_{1}) = k$ (which is not a part of the universe $V$). Each of the other co-occurrences $[s_{i},t_{i}]$ has length
\[
    \len(s_{i},t_{i}) \;=\; t_{i} - s_{i} + 1 \;=\;  (i-1 + e_{i-1}) - i + 1 \;=\; e_{i-1}
\] Therefore, the remaining minimal co-occurrences contribute plus one to of each the $\delta(\cdot)$-entries $e_1,\ldots,e_{k-1}$.

Furthermore, each $[s_{i},t_{i}]$ contributes negative one to $\delta(\len(s_{i},t_{i+1}))$, where we define $t_{k+1} = |R_j| + 1$. For $i < k$ we have that $\len(s_{i},t_{i+1}) = 1 + \len(s_{i+1},t_{i+1}) = 1 + e_i$ since $s_{i} + 1 = s_{i+1}$. Note that $1 + e_i$ is odd since $e_i$ is even, and therefore not in $V$. For $i = k$, we get that $t_{i+1} = t_{k+1} = |R_j| + 1$. The last $k\alpha$ (at least) characters of $R_j$ are not in $Q$, so $\len(s_{k},|R_j| + 1) > k\alpha$ and therefore not in $V$.   

Hence, $R_j$ contributes plus one to $\delta(e_1),\ldots,\delta(e_{k-1})$ and does not contribute anything to $\delta(i)$ for any other $i \in V\setminus T_j$. To construct $S$, concatenate $R_1,\ldots,R_t$. Note that any minimal co-occurrence that crosses the boundary between two gadgets will only contribute to $\delta(i)$ for $i > k\alpha$ due to the trailing characters of each gadget that are not in $Q$. Since $S$ consists of $t = O(\alpha)$ gadgets that each have length $O(k\alpha)$, we have $n = O(k\alpha^2)$ as stated above.

Finally, note that the assumption that $|T|$ is a multiple of $k-1$ is not necessary. We ensure that the size is a multiple of $k-1$ by adding at most $k-2$ \emph{even} integers from $\{k\alpha+1,\ldots,2k\alpha\}$ and adjusting the size of the gadgets accordingly. The reduction still works because we add even integers, the size of $S$ is asymptotically unchanged, and any minimal co-occurrence due to the extra elements will have length greater than $k\alpha$ and will not contribute to any relevant $\delta$-entries.

\paragraph*{Acknowledgements} We would like to thank the anonymous reviewers for their comments, which improved the presentation of the paper. 

\appendix

\section{Preprocessing}\label{app:preprocessing}
\paragraph{Finding Minimal Co-Occurrences}
To build the data structure, we need to find all the minimal co-occurrences in order to determine $\delta$. For $j \geq \rep{1}$, let $\lm(j)$ denote the length of the left-minimal co-occurrence ending at index $j$. By Lemma~\ref{lem:lmm_start}, $\lm(\rep{i}) = \len(\lep{i},\rep{i})$ for each $i \in [1,\m]$. Furthermore, for $j \in [\rep{i}+1,\rep{i+1}-1]$ we have $\lm(j) = \lm(j-1) + 1$ since both of the left-minimal co-occurrences ending at these two indices start at $\lep{i}$. However, $\lm(\rep{i}) \leq \lm(\rep{i}-1)$ for each $i \in [2,\m]$; the left-minimal co-occurrence ending at $\rep{i}$ starts at least one index further to the right than the left-minimal co-occurrence ending at $\rep{i}-1$ because $\lep{i-1} < \lep{i}$, so it cannot be strictly longer.

We determine $\lep{1},\ldots,\lep{\m}$ and $\rep{1},\ldots,\rep{\m}$ using the following algorithm. Traverse $S$ and maintain $\lm(j)$ for the current index $j$. Whenever $\lm(j) \neq \lm(j-1) + 1$ the interval $[j - \lm(j) + 1,\; j]$ is one of the minimal co-occurrences. Note that this algorithm finds the minimal co-occurrences in order by their rightmost endpoint. We maintain $\lm(j)$ as follows. For each character $p \in Q$ let $\textsf{dist}(j,p)$ be the distance to the closest occurrence of $p$ on the left of $j$. Then $\lm(j)$ is the maximum $\textsf{dist}(j,\cdot)$-value. As in~\cite{SBDN+2021}, we maintain the $\textsf{dist}(j,\cdot)$-values in a linked list that is dynamically reordered according to the well-known \emph{move-to-front} rule. The algorithm works as follows. Maintain a linked list over the elements in $Q$, ordered by increasing $\textsf{dist}$-values. Whenever you see some $p \in Q$, access its node in expected constant time through a dictionary and move it to the front of the list. The least recently seen $p \in Q$ (i.e., the $p$ with the largest $\textsf{dist}(j,\cdot)$-value) is found at the back of the list in constant time. The algorithm uses $O(q)$ space and expected constant time per character in $S$, thus it runs in expected $O(n)$ time. 

\paragraph{Building the Data Structure}
We build the data structure as follows. Traverse $S$ and maintain the two most recently seen minimal co-occurrences using the algorithm above. We maintain the non-zero $\delta(\cdot)$-values in a dictionary $D$ that is implemented using chained hashing in conjunction with universal hashing~\cite{CW1979}. When we find a new minimal co-occurrence $[\lep{i+1},\rep{i+1}]$ we increment $D[\len(\lep{i},\rep{i})]$ and decrement $D[\len(\lep{i},\rep{i+1})]$. Recall that $Z = \{z_1,\ldots,z_d\}$ where $z_j < z_{j+1}$ is defined such that $\delta(i) \neq 0$ if and only if $i \in Z$. After processing $S$ the dictionary $D$ encodes the set $\{(z_1,\delta(z_1)),\ldots,(z_d,\delta(z_d))\}$. Sort the set to obtain the array $E[j] = \delta(z_j)$. Compute the partial sum array over $E$, i.e the array 
    \[
        F[j] = \sum_{i=1}^j E[i] = \sum_{i=1}^j \delta(z_i) = \lmm(z_j). \qquad\qquad\qquad \text{(we use $1$-indexing)}
    \]
Build the predecessor data structure over $Z$ and associate $\lmm(z_j)$ with each key $z_j$.

The algorithm for finding the minimal co-occurrences uses $O(q)$ space and the remaining data structures all use $O(d)$ space, for a total of $O(d + q)$ space. Finding the minimal co-occurrences and  maintaining $D$ takes $O(n)$ expected time, and so does building the predecessor structure from the sorted input.

Furthermore, we use the following sorting algorithm to sort the $d$ entries in $D$ with $O(d)$ extra space in expected $O(n)$ time. If $d < n/\log n$ we use merge sort which uses $O(d)$ extra space and runs in $O(d\log d) = O(n)$ time. If $d \geq n/\log n$ we use radix sort with base $\sqrt{n}$, which uses $O(\sqrt{n})$ extra space and $O(n)$ time. To elaborate, assume without loss of generality that $2k$ bits are necessary to represent $n$. We first distribute the elements into $2^k = O(\sqrt{n})$ buckets according to the most significant $k$ bits of their binary representation, partially sorting the input. We then sort each bucket by distributing the elements in that bucket according to the \emph{least} significant $k$ bits of their binary representation, fully sorting the input. The algorithm runs in $O(n)$ time and uses $O(\sqrt{n}) = O(n/\log n) = O(d)$ extra space. 

\section{Lower Bound on Time}\label{app:lower_bound_on_time}
We now prove Theorem~\ref{thm:time_space_tradeoff} by the following reduction from the predecessor problem. Let $U$, $Q$ and $G_i$ be as defined in Section~\ref{sec:increment_gadget} and let $X = \{x_1,x_2,\ldots,x_m\} \subseteq U$ where $x_1 < \ldots < x_m$. Define
\[
    S = \underbrace{G_{x_1}\cdots G_{x_1}}_{x_1}\s \underbrace{G_{x_2}\cdots G_{x_2}}_{x_2 - x_1} \s\ldots\ldots\ldots\s \underbrace{G_{x_m}\cdots G_{x_m}}_{x_m - x_{m-1}}
\] 
By Lemma~\ref{lem:increment_gadget} we have that $\delta(x_1) = x_1$, $\delta(x_i) = x_i - x_{i-1}$ for $i \in [2,m]$ and $\delta(i) = 0$ for $i \in U\setminus X$. Then, if the predecessor of some $x \in U$ is $x_p$, we have 
\[
    \lmm(x) = \sum_{i=2}^x \delta(i) = x_1 + (x_2 - x_1) + \ldots + (x_p - x_{p-1}) = x_p    
\]
On the other hand, if  $x < x_1$ then  $\sum_{i=0}^x\delta(i) = 0$, unambiguously identifying that $x$ has no predecessor.

Applying Lemma~\ref{lem:increment_gadget} again, we have $d \leq 8m$. Furthermore, there are $x_1 + (x_2 - x_1) + \ldots + (x_m - x_{m-1}) = x_m \leq u$ gadgets in total so $n \leq 2u^2$. Hence, given a data structure that supports $\lmm$ in $f_t(n,q,d)$ time using $f_s(n,q,d)$ space, we get a data structure supporting predecessor queries on $X$ in $O(f_t(2u^2, 2, 8m))$ time and $O(f_s(2u^2,2,8m))$ space, proving Theorem~\ref{thm:time_space_tradeoff}.

\end{document}